\DeclareMathOperator{\T}{T}
\DeclareMathOperator{\cone}{cone}
\DeclareMathOperator{\diff}{d}
\DeclareMathOperator{\conv}{conv}
\DeclareMathOperator{\probability}{P}
\DeclareMathOperator{\expect}{E}
\DeclareMathOperator{\1}{1}
\newcommand{\fF}{\frak F}
\newcommand{\fG}{\frak G}
\newcommand{\fS}{\frak S}
\newcommand{\cM}{\mathscr M}
\newcommand{\cB}{\mathscr B}
\newcommand{\calF}{\mathcal F}
\newcommand{\calG}{\mathcal G}
\newcommand{\calS}{\mathcal S}
\newcommand{\calC}{\mathcal C}
\newcommand{\comment}[1]{}
\newcommand{\N}{\mathbb{N}}
\newcommand{\R}{\mathbb{R}}
\begin{document}

\title{A General Duality Relation with Applications in Quantitative Risk Management}

\author{Raphael Hauser\footnotemark[1], Sergey Shahverdyan \footnotemark[2] and Paul Embrechts\footnotemark[3]}

\date{\today}

\renewcommand{\thefootnote}{\fnsymbol{footnote}}
\footnotetext[2]{Mathematical Institute, Oxford University,
Andrew Wiles Building, Radcliffe Observatory Quarter, Woodstock Road, Oxford OX2 6GG,
United Kingdom, {\ttfamily hauser@maths.ox.ac.uk}. Associate Professor in Numerical Mathematics, and Tanaka Fellow in Applied Mathematics at Pembroke College, Oxford. This author was supported through grant EP/H02686X/1 from the Engineering and Physical Sciences Research Council of the UK. }
\footnotetext[2]{Mathematical Institute, Oxford University,
Andrew Wiles Building, Radcliffe Observatory Quarter, Woodstock Road, Oxford OX2 6GG,
United Kingdom, {\ttfamily sergey.shahverdyan@maths.ox.ac.uk}. This author was supported through grant EP/H02686X/1 from the Engineering and Physical Sciences Research Council of the UK. }
\footnotetext[3]{Department of Mathematics, ETH Zurich, 8092 Zurich, Switzerland 
{\ttfamily embrechts@math.ethz.ch}. Senior SFI Professor.}

\renewcommand{\thefootnote}{\arabic{footnote}}

\maketitle

\begin{abstract}A fundamental problem in risk management is the robust aggregation of different sources of risk in a situation where little or no data are available to infer information about their dependencies. A popular approach to solving this problem is to formulate an optimization problem under which one maximizes a risk measure over all multivariate distributions that are consistent with the available data. In several special cases of such models, there exist dual problems that are easier to solve or approximate, yielding robust bounds on the aggregated risk. In this chapter we formulate a general optimization problem, which can be seen as a doubly infinite linear programming problem, and we show that the associated dual generalizes several well known special cases and extends to new risk management models we propose.
\end{abstract}

\section{Introduction}\label{introduction}

An important problem in quantitative risk management is to aggregate 
several individually studied types of risks into an overall position. Mathematically, 
this translates into studying the worst-case distribution tails of $\Psi(X)$, 
where $\Psi:\R^n\rightarrow\R$ is a given function that represents the risk (or underiability) of an 
outcome, and where $X$ is a random vector that takes values in $\R^n$ and whose distribution is only partially 
known. For example, one may only have information about the marginals of $X$ and possibly partial 
information about some of the moments. 

To solve such problems, duality is often exploited, as the dual 
may be easier to approach numerically or analytically \cite{puccetti1, puccetti2,  puccetti3, 
ramachandran, popescu}. Being able to formulate a dual is also important in cases where the primal is approachable algorithmically, as solving the primal and dual problems jointly provides an approximation guarantee throughout the run of a solve: if the duality gap (the difference between the primal and dual objective values) falls below a chosen threshold relative to the primal objective, the algorithm can be stopped with a guarantee of approximating the optimum to a fixed precision that depends on the chosen threshold. This is a well-known technique in convex optimization, see e.g.\ \cite{borwein-lewis}. 

Although for some special cases of the marginal 
problem analytic solutions and powerful numerical heuristics exist \cite{puccettinew1, puccettinew2, puccettinew3, wang1, wang2}, these techniques do not apply when additional constraints are imposed to force the probability measures over which we maximize the risk to conform with empirical observations: In a typical case, the bulk of the empirical data may be contained in a region $D$ that can be approximated by an ellipsoid or the union of several (disjoint or overlapping) polyhedra. For a probability measure $\mu$ to be considered a reasonable explanation of the true distribution of (multi-dimensional) losses, one would require the probability mass contained in $D$ to lie in an empirically estimated confidence region, that is, $\ell\leq\mu(D)\leq u$ for some estimated bounds $\ell<u$. In such a situation, the derivation of robust risk 
aggregation bounds via dual problems remains a powerful and interesting approach. 

In this chapter we formulate a general optimization problem, which can be seen as a doubly infinite linear programming problem, and we show that the associated dual generalizes several well known special cases. 
We then apply this duality framework to a new class of risk management models we propose in Section \ref{bounds on integrals}. 

\section{A General Duality Relation}\label{duality}

Let $(\Phi,\fF)$, $(\Gamma,\fG)$ and $(\Sigma,\fS)$ 
be complete measure spaces, and let 
$A:\,\Gamma\times\Phi\rightarrow\R$, 
$a:\,\Gamma\rightarrow\R$, 
$B:\,\Sigma\times\Phi\rightarrow\R$, 
$b:\,\Sigma\rightarrow\R$, and 
$c:\,\Phi\rightarrow\R$ 
be bounded measurable functions on these spaces and the corresponding product 
spaces. Let $\cM_{\fF}$, $\cM_{\fG}$ and $\cM_{\fS}$ be the set of signed 
measures with finite variation on $(\Phi,\fF)$, $(\Gamma,\fG)$ and 
$(\Sigma,\fS)$ respectively. 
We now consider the following pair of optimization problems over $\cM_{\fF}$ and 
$\cM_{\fG}\times\cM_{\fS}$ respectively, 
\begin{align*}
\text{(P)}\quad\sup_{\calF\in\cM_{\fF}}\,&\int_{\Phi}c(x)\diff \calF(x)\\
\text{s.t. }&\int_{\Phi}A(y,x)\diff \calF(x)\leq a(y),\quad(y\in\Gamma),\\
&\int_{\Phi}B(z,x)\diff \calF(x)= b(z),\quad(z\in\Sigma),\\
&\calF\geq 0,
\end{align*}
and 
\begin{align*}
\text{(D)}\quad\inf_{(\calG,\calS)\in\cM_{\fG}\times\cM_{\fS}}
\,&\int_{\Gamma}a(y)\diff \calG(y)+\int_{\Sigma}b(z)\diff\calS(z),\\
\text{s.t. }&\int_{\Gamma}A(y,x)\diff\calG(y)+\int_{\Sigma}
B(z,x)\diff\calS(z)\geq c(x),\quad(x\in\Phi),\\
&\calG\geq 0.
\end{align*}
We claim that the infinite-programming problems (P) and (D) are duals of 
each other. 

\begin{theorem}[Weak Duality]\label{weak duality}
For every (P)-feasible measure $\calF$ and every 
(D)-feasible pair $(\calG,\calS)$ we have 
\begin{equation*}
\int_{\Phi}c(x)\diff\calF(x)\leq \int_{\Gamma}a(y)\diff\calG(y)+
\int_{\Sigma}b(z)\diff\calS(z).
\end{equation*}
\end{theorem}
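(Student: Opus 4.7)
The plan is the usual weak-duality argument for linear programming, lifted to the infinite-dimensional setting by integrating the primal and dual constraint inequalities against the other problem's feasible measure and then swapping the order of integration by Fubini's theorem.

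Concretely, I would start from the (D)-feasibility condition, which states that for every $x\in\Phi$,
\begin{equation*}
c(x)\leq \int_{\Gamma}A(y,x)\diff\calG(y)+\int_{\Sigma}B(z,x)\diff\calS(z).
\end{equation*}
Since $\calF\geq 0$, integrating this pointwise inequality against $\diff\calF(x)$ preserves the inequality and yields
\begin{equation*}
\int_{\Phi}c(x)\diff\calF(x)\leq \int_{\Phi}\!\!\int_{\Gamma}A(y,x)\diff\calG(y)\diff\calF(x)+\int_{\Phi}\!\!\int_{\Sigma}B(z,x)\diff\calS(z)\diff\calF(x).
\end{equation*}
Because $A$ and $B$ are bounded and measurable on the product spaces, and each of $\calF$, $\calG$, $\calS$ has finite total variation, the double integrals are absolutely convergent, so Fubini's theorem applies and we may interchange the order of integration on the right-hand side.

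After the swap, the (P)-feasibility conditions become directly usable. For the $A$-term, we use that $\int_{\Phi}A(y,x)\diff\calF(x)\leq a(y)$ together with $\calG\geq 0$ to conclude
\begin{equation*}
\int_{\Gamma}\!\!\int_{\Phi}A(y,x)\diff\calF(x)\diff\calG(y)\leq\int_{\Gamma}a(y)\diff\calG(y).
\end{equation*}
For the $B$-term, the primal equality constraint $\int_{\Phi}B(z,x)\diff\calF(x)=b(z)$ allows us to replace the inner integral by $b(z)$ exactly, turning the corresponding iterated integral into $\int_{\Sigma}b(z)\diff\calS(z)$; note that no sign restriction on $\calS$ is needed here precisely because we have equality rather than inequality. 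Chaining these estimates gives the claimed bound.

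The only genuine subtlety is the application of Fubini's theorem: I would verify it by observing that $|A|$ and $|B|$ are bounded by constants $M_A,M_B$, so $\int\int |A|\,\diff|\calG|\diff|\calF|\leq M_A\|\calG\|_{\text{TV}}\|\calF\|_{\text{TV}}<\infty$ (and similarly for $B$), which is the integrability hypothesis Fubini requires. This is the main (and essentially only) technical step; the rest is an accounting exercise with the sign constraints, using $\calG\geq 0$ on the inequality constraint side and the equality constraint on the $\calS$ side to guarantee that no inequality is reversed.
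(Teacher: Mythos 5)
Your proposal is correct and follows essentially the same route as the paper's proof: integrate the dual feasibility inequality against the nonnegative measure $\calF$, interchange the order of integration via Fubini's theorem, and then invoke the primal constraints (using $\calG\geq 0$ for the inequality constraints and exact substitution for the equality constraints). The paper states this in two displayed inequalities without elaboration; your version simply supplies the justification for Fubini (boundedness of $A$, $B$ plus finite total variation) that the paper leaves implicit.
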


\begin{proof}
Using Fubini's Theorem, we have 
\begin{align*}
\int_{\Phi}c(x)\diff\calF(x)&\leq
\int_{\Gamma\times\Phi}A(y,x)\diff(\calG\times\calF)(y,x)+
\int_{\Sigma\times\Phi}B(z,x)\diff(\calS\times\calF)(z,x)\\
&\leq\int_{\Gamma}a(y)\diff\calG(y)+
\int_{\Sigma}b(z)\diff\calS(z).
\end{align*}
\end{proof}

In various special cases, such as those discussed in Section \ref{special cases}, strong duality is known to hold subject to regularity assumptions, that is, the optimal values of (P) and (D) coincide. Another special case under  which strong duality applies is when the measures $\calF$, $\calG$ and $\calS$ have densities in appropriate Hilbert spaces, see the forthcoming DPhil thesis of the second author \cite{sergey}. 

We remark that the quantifiers in the constraints can be weakened if the set of allowable measures is restricted. For example, if $\calG$ is restricted to lie in a set of measures that are absolutely continuous with respect to a fixed measure $\calG_0\in\cM_{\fG}$, then the quantifier $(y\in\Gamma)$ can be weakened to $(\calG_0\text{-almost all }y\in\Gamma)$. 

\section{Classical Examples}\label{special cases}

Our general duality relation of Theorem \ref{weak duality} generalizes many classical duality results, of which we now point out a few examples. Let $p(x_1,\dots,x_k)$ be a function of $k$ arguments. Then we write 
\begin{equation*}
1_{\{x: p(x)\geq 0\}}:=1_{\{y: p(y)\geq 0\}}(x)=\begin{cases}1\quad&\text{if }p(x)\geq 0,\\
0\quad&\text{otherwise}. 
\end{cases}
\end{equation*}
In other words, we write the argument $x$ of the indicator function directly into the set $\{y: p(y)\geq 0\}$ 
that defines the function, rather than using a separate set of variables $y$. This abuse of notation will make it easier to identify which inequality is satisfied by the arguments where the function $1_{\{y: p(y)\geq 0\}}(x)$ takes the value $1$. 

We start with the Moment Problem studied by Bertsimas \& Popescu \cite{popescu}, who considered generalized Chebychev inequalities of the form 
\begin{align*}
\text{(P')}\quad\sup_{X}\;&\probability[r(X)\geq 0]\\
\text{s.t. }&\expect_{\mu}[X_1^{k_1}\dots X_n^{k_n}]=b_{k},\quad(k\in J),\\
&X\text{ a random vector taking values in }\R^n,\nonumber
\end{align*}
where $r:\R^n\rightarrow\R$ is a multivariate polynomial and $J\subset
\N^n$ is a finite sets of multi-indices. In other words, some moments of $X$ are known. 
By choosing $\Phi=\R^n$, 
$\Gamma=\emptyset$, $\Sigma=J\cup\{0\}$, 
\begin{align*}
&B(k,x)=x_1^{k_1}\dots x_n^{k_n},\quad b(k)=b_k,\quad(k\in J),\\
&B(0,x)=\1_{\R^n},\quad b(0)=1,
\end{align*}
and $c(x)=\1_{\{x:\,r(x))\geq 0\}}$, where we made use of the abuse of notation discussed above. 
Problem (P') becomes 
a special case of the primal problem considered in Section \ref{duality}, 
\begin{align*}
\text{(P)}\quad\sup_{\calF}\,&\int_{\R^n}\1_{\{x:\,r(x)\geq 0\}}\diff\calF(x)\\
\text{s.t. }&\int_{\R^n}x_1^{k_1}\dots x_n^{k_n}\diff\calF(x)=b_{k},\quad(k\in J),\\
&\int_{\R^n}1\diff\calF(x)=1,\\
&\calF\geq 0.
\end{align*}
Our dual 
\begin{align*}
\text{(D)}\quad\inf_{(z,z_0)\in\R^{|J|+1}}\,&\sum_{k\in J} z_k b_k + z_0\\
\text{s.t. }&\sum_{k\in J} z_k x_1^{k_1}\dots x_n^{k_n} + z_0\geq \1_{\{x: r(x)\geq 0\}},
\quad(x\in\R^n)
\end{align*}
is easily seen to be identical with the dual (D') identified by Bertsimas 
\& Popescu, 
\begin{align*}
\text{(D')}\quad\inf_{(z,z_0)\in\R^{|J|+1}}\,&\sum_{k\in J} z_k b_k + z_0\\
\text{s.t. }&\forall\,x\in\R^n, r(x)\geq 0\Rightarrow
\sum_{k\in J} z_k x_1^{k_1}\dots x_n^{k_n} + z_0-1\geq 0,\\
&\forall\,x\in\R^n, \sum_{k\in J} z_k x_1^{k_1}\dots x_n^{k_n} + z_0\geq 0.
\end{align*}
Note that since $\Gamma,\Sigma$ are finite, the constraints of (D') are 
polynomial copositivity constraints. The numerical solution of 
semi-infinite programming problems 
of this type can be approached via a nested hierarchy of semidefinite 
programming relaxations that yield better and better approximations to 
(D'). The highest level problem within this hierarchy is guaranteed to 
solve (D') exactly, although the corresponding SDP is of exponential 
size in the dimension $n$, in the degree of the polymomial $r$, and 
in $\max_{k\in J}(\sum_i k_i)$. For further details see \cite{popescu, parrilo, lasserre}, 
and Section \ref{piecewise polynomial} below.

Next, we consider the Marginal Problem studied by R\"uschendorf \cite{ruschendorf1, ruschendorf2} and 
Ramachandran \& R\"uschendorf \cite{ramachandran}, 
\begin{equation*}
\text{(P')}\quad\sup_{\calF\in\cM_{F_1,\dots,F_n}}\,\int_{\R^n}h(x)\diff
\calF(x), 
\end{equation*}
where $\cM_{F_1,\dots,F_n}$ is the set of probability measures on 
$\R^n$ whose marginals 
have the cdfs $F_i$ $(i=1,\dots,n)$.  Problem (P') can easily be seen 
as a special case of the framework of Section \ref{duality} by 
setting $c(x)=h(x)$, $\Phi=\R^n$, $\Gamma=\emptyset$, $\Sigma=\N_n\times\R$, 
$B(i,z,x)=\1_{\{y:\,y_i\leq z\}}$ (using the abuse of notation discussed earlier), 
and $b_i(z)=F_i(z)$ $(i\in\N_n,\,z\in\R)$, 
\begin{align*}
\text{(P)}\quad\sup_{\calF}\,&\int_{\R^n}h(x)\diff\calF(x)\\
\text{s.t. }&\int_{\R}\1_{\{x_i\leq z\}}\diff\calF(x)
=F_i(z),\quad(z\in\R, i\in\N_n)\\
&\calF\geq 0. 
\end{align*}
Taking the dual, we find 
\begin{align*}
\text{(D)}\quad\inf_{\calS_1,\dots,\calS_n}\,&
\sum_{i=1}^n\int_{\R}F_i(z)\diff\calS_i(z)\\
\text{s.t. }&\sum_{i=1}^n\int_{\R}\1_{\{x_i\leq z\}}\diff\calS_i(z)
\geq h(x),\quad(x\in\R^n).
\end{align*}
The signed measures $\calS_i$ being of finite variation, the functions 
$S_i(z)=\calS((-\infty,z])$ and the limits $s_i=\lim_{z\rightarrow
\infty}S_i(z)=\calS((-\infty,+\infty))$ are well defined and finite. 
Furthermore, using $\lim_{z\rightarrow-\infty}F_i(z)=0$ and 
$\lim_{z\rightarrow+\infty}F_i(z)=1$, we have 
\begin{align*}
\sum_{i=1}^n\int_{\R}F_i(z)\diff\calS(z)&=
\sum_{i=1}^n\left(F_i(z)S_i(z)|^{+\infty}_{-\infty}-\int_{\R}S_i(z)\diff F_i(z)
\right)\\
&=\sum_{i=1}^n s_i - \sum_{i=1}^n\int_{\R}S_i(z)\diff F_i(z)\\
&=\sum_{i=1}^n\int_{\R}(s_i-S_i(z))\diff F_i(z),
\end{align*}
and likewise, 
\begin{equation*}
\sum_{i=1}^n\int_{\R}\1_{\{x_i\leq z\}}\diff\calS_i(z)=
\sum_{i=1}^n\int_{x_i}^{+\infty}1\diff\calS_i(z)
=\sum_{i=1}^n (s_i - S_i(x_i)). 
\end{equation*}
Writing $h_i(z)=s_i-S_i(z)$, (D) is therefore equivalent to 
\begin{align*}
\text{(D')}\quad\inf_{h_1,\dots,h_n}\,&\sum_{i=1}^n\int_{\R}h_i(z)\diff F_i(z)\\
\text{s.t. }&\sum_{i=0}^n h_i(x_i)\geq h(x),\quad
(x\in\R^n).
\end{align*}
This is the dual identified by Ramachandran \& R\"uschendorf \cite{ramachandran}. 
Due to the general form of the functions $h_i$, the infinite programming problem 
(D') is not directly usable in numerical computations. However, for specific 
$h(x)$, (D')-feasible functions $(h_1,\dots,h_n)$ can sometimes be constructed 
explicitly, yielding an upper bound on the optimal objective function value of 
(P') by virtue of Theorem \ref{weak duality}. Embrechts \& Puccetti 
\cite{puccetti1,puccetti2,puccetti3} used this approach to derive quantile bounds 
on $X_1+\dots+X_n$, where $X$ is a random vector with known marginals 
but unknown joint distribution. In this case, the relevant primal objective function 
is defined by $h(x)=\1_{\{x:\,e^{\T}x\geq t\}}$, where $t\in\R$ is a fixed level. 
More generally, $h(x)=\1_{\{x:\,\Psi(x)\geq t\}}$ can be chosen, where $\Psi$ 
is a relevant risk aggregation function, or $h(x)$ can model any risk measure 
of choice. 

Our next example is the Marginal Problem with Copula Bounds, an extension to the marginal problem mentioned in \cite{puccetti1}. The copula defined by the probability measure $\calF$ with marginals 
$F_i$ is the function 
\begin{align*}
\calC_{\calF}:\,[0,1]^n&\rightarrow[0,1],\\
u&\mapsto F\left(F_1^{-1}(u_1),\dots,F_n^{-1}(u_n)\right). 
\end{align*}
A copula is any function $\calC:[0,1]^n\rightarrow[0,1]$ that satisfies 
$\calC=\calC_{\calF}$ for some probability measure $\calF$ on $\R^n$. 
Equivalently, a copula is the multivariate cdf of any probability measure 
on the unit cube $[0,1]^n$ with uniform marginals. In 
quantitative risk management, using the model 
\begin{equation*}
\quad\sup_{\calF\in\cM_{F_1,\dots,F_n}}\,\int_{\R^n}h(x)\diff\calF(x) 
\end{equation*}
to bound the worst case risk for a random vector $X$ 
with marginal distributions $F_i$ can be overly 
conservative, as no dependence structure between the coordinates 
of $X_i$ is assumed given at all. The structure that determines this dependence 
being the copula $\calC_{\calF}$, where $\calF$ is the multivariate 
distribution of $X$, Embrechts \& Puccetti \cite{puccetti1} suggest problems of the form 
\begin{align*}
\text{(P')}\quad\sup_{\mu\in\cM_{F_1,\dots,F_n}}\,&\int_{\R^n}h(x)\diff\mu(x), \\
\text{s.t. }&\calC_{\text{lo}}\leq\calC_{\calF}\leq\calC_{\text{up}},
\end{align*}
as a natural framework to study the situation in which partial dependence 
information is available. In problem (P'), $\calC_{\text{lo}}$ and $\calC_{\text{up}}$ are 
given copulas, and inequality between copulas is defined by pointwise 
inequality,
\begin{equation*}
\calC_{\text{lo}}(u)\leq\calC_{\calF}(u)\quad(u\in[0,1]^n).
\end{equation*}
Once again, (P') is a special case of the general framework studied in 
Section \ref{duality}, as it is equivalent to write 
\begin{align*}
\text{(P)}\quad\sup_{\calF}\,&\int_{\R^n}h(x)\diff\calF(x)\\
\text{s.t. }&\int_{\R^n}\1_{\{x\leq(F_1^{-1}(u_1),\dots,F_n^{-1}(u_n))\}}
(u,x)\diff\calF(x)\leq\calC_{\text{up}}(u),\quad(u\in[0,1]^n),\\
&\int_{\R^n}-\1_{\{x\leq(F_1^{-1}(u_1),\dots,F_n^{-1}(u_n))\}}
(u,x)\diff\calF(x)\leq-\calC_{\text{lo}}(u),\quad(u\in[0,1]^n),\\
&\int_{\R^n}\1_{\{x_i\leq z\}} (z,x)\diff\calF(x)=F_i(z),\quad(i\in\N_n,\,z\in\R),\\
&\calF\geq 0. 
\end{align*}
The dual of this problem is given by 
\begin{align*}
\text{(D)}\quad\inf_{\calG_{\text{up}},\calG_{\text{lo}},\calS_1,\dots,\calS_n}\,&
\int_{[0,1]^n}\calC_{\text{up}}(u)\diff\calG_{\text{up}}(u)-
\int_{[0,1]^n}\calC_{\text{lo}}(u)\diff\calG_{\text{lo}}(u)
+\sum_{i=1}^n\int_{\R}F_i(z)\diff\calS_i(z)\\
\text{s.t. }&\int_{[0,1]^n}\1_{\{x\leq(F_1^{-1}(u_1),\dots,F_n^{-1}(u_n))\}}
(u,x)\diff\calG_{\text{up}}(u)\\
&\hspace{1cm}-\int_{[0,1]^n}\1_{\{x\leq(F_1^{-1}(u_1),\dots,F_n^{-1}(u_n))\}}
(u,x)\diff\calG_{\text{lo}}(u)\\
&\hspace{1cm}+\sum_{i=1}^n\int_{\R}\1_{\{x_i\leq z\}}\diff\calS_i(z)
\geq h(x),\quad(x\in\R^n),\\
&\calG_{\text{lo}},\calG_{\text{up}}\geq 0.
\end{align*}
Using the notation $s_i, S_i$ introduced in Section \ref{special cases}, this 
problem can be written as 
\begin{align*}
\inf_{\calG_{\text{up}},\calG_{\text{lo}},\calS_1,\dots\calS_n}\,&
\int_{[0,1]^n}\calC_{\text{up}}(u)\diff\calG_{\text{up}}(u)-
\int_{[0,1]^n}\calC_{\text{lo}}(u)\diff\calG_{\text{lo}}(u)
+\sum_{i=1}^n\int_{\R}(s_i-S_i(z))\diff F_i(z)\\
\text{s.t. }&\calG_{\text{up}}(\cB(x))
-\calG_{\text{lo}}(\cB(x))
+\sum_{i=1}^n(s_i-S_i(x_i))\geq h(x),\quad(x\in\R^n),\\
&\calG_{\text{up}},\calG_{\text{lo}}\geq 0, 
\end{align*}
where $\cB(x)=\{u\in[0,1]^n:\,u\geq(F_1(x_1),\dots,F_n(x_n))\}$. To the best of our knowledge, this dual has not been identified before.

Due to the high dimensionality of the space of variables and constraints 
both in the primal and dual, the marginal problem with copula bounds  
is difficult to solve numerically, even for very coarse disrecte approximations.

\section{Robust Risk Aggregation via Bounds on Integrals}
\label{bounds on integrals}

In quantitative risk management, distributions are often estimated within a parametric family from the available data. For example, the tails of marginal distributions may be estimated via extreme value theory, or a Gaussian copula may be fitted to the multivariate distribution of all risks under consideration, to model their dependencies. The choice of a parametric family introduces {\em model uncertainty}, while fitting a distribution from this family via statistical estimation introduces {\em parameter uncertainty}. In both cases, a more robust alternative would be to study models in which the available data is only used to estimate upper and lower bounds on finitely many integrals of the form 
\begin{equation}\label{of the form}
\int_{\Phi}\phi(x)\diff\calF(x),
\end{equation}
where $\phi(x)$ is a suitable test function. A suitable way of estimating upper and lower bounds on such integrals from sample data $x_i$ $(i\in\N_k)$ is to estimate confidence bounds via bootstrapping.  

\subsection{Motivation}\label{motivation}

To motivate the use of constraints in the form of bounds on integrals \eqref{of the form}, we offer the following explanations: First of all, discretized marginal constraints are of this form with piecewise constant test functions, as the requirement that $F_i(\xi_{k})-F_i(\xi_{k-1})=b_k$ $(k=1,\dots,\ell)$ for a fixed set of discretization points $\xi_0<\dots<\xi_{\ell}$ can be expressed as 
\begin{equation}\label{new form}
\int_{\Phi}1_{\{\xi_{k}\leq x_i\leq\xi_{k-1}\}}\diff\calF(x)=b_k,\quad(k=1,\dots,\ell). 
\end{equation}
It is furthermore quite natural to relax each of these equality constraints to two inequality constraints 
\begin{equation*}
b^{\ell}_{k,i}\leq\int_{\Phi}1_{\{\xi_{k}\leq x_i\leq\xi_{k-1}\}}\diff\calF(x)\leq b^u_{k,i} 
\end{equation*}
when $b_k$ is estimated from data. 

More generally, constraints of the form $\probability[X\in S_j]\leq b^u_j$ for some measurable 
$S_j\subseteq\R^n$ of interest can be written as 
\begin{equation*}
\int_{\Phi} 1_{S_j}(x)\diff\calF(x)\leq b^u_j. 
\end{equation*}
A collection of $\ell$ constraints of this form can be relaxed by replacing them by a convex combination 
\begin{equation*}
\int_{\Phi}\sum_{j=1}^{\ell}w_j 1_{S_j}(x)\diff\calF(x)\leq\sum_{j=1}^{\ell} w_j b^u_j,  
\end{equation*}
where the weights $w_j>0$ satisfy $\sum_j w_j =1$ and express the relative importance of each constituent constraint. Non-negative test functions thus have a natural interpretation as importance densities in sums-of-constraints relaxations. This allows one to put higher focus on getting the probability mass right in regions where it particularly matters (e.g., values of $X$ that account for the bulk of the profits of a financial institution), while maximzing the risk in the tails without having to resort to too fine a discretization.   

While this suggests to use a piecewise approximation of a prior estimate of the density of $X$ as a test function, the results are robust under mis-specification of this prior, for as long as $\phi(x)$ is nonconstant, constraints that involve the integral \eqref{of the form} tend to force the probability weight of $X$ into the regions where the sample points are denser. To illustrate this, consider a univariate random variable with density $f(x)=2/3(1+x)$ on $x\in[0,1]$ and test function $\phi(x)=1+a x$ with $a\in[-1,1]$. Then 
$\int_{0}^{1}\phi(x) f(x)\diff x=1+5a/9$.
The most dispersed probability measure on $[0,1]$ that satisfies 
\begin{equation}\label{hallo}
\int_{0}^{1}\phi(x)\diff\calF(x)=1+\frac{5a}{9}
\end{equation}
has an atom of weight $4/9$ at $0$ and an atom of weight $5/9$ at $1$ independently of $a$, as long as $a\neq 0$. The constraint \eqref{hallo} thus forces more probability mass into the right half of the interval $[0,1]$, where the unknown (true) density $f(x)$ has more mass and produces more sample points. 

As a second illustration, take the density $f(x)=3 x^2$ and the same linear test function as above. This time we find $\int_{0}^{1}\phi(x) f(x)\diff x=1+3a/4$,  and the most dispersed probability measure on $[0,1]$ that satisfies 
\begin{equation*}
\int_{0}^{1}\phi(x)\diff\calF(x)=1+\frac{3a}{4}
\end{equation*}
has an atom of weight $3/4$ at $0$ and an atom of weight $1/4$ at $1$ independently of $a\neq 0$, with similar conclusions as above, except that the effect is even stronger, correctly reflecting the qualitative features of the density $f(x)$.

\subsection{General Setup and Duality}\label{general setup}

Let $\Phi$ be decomposed into a partition $\Phi=\bigcup_{i=1}^k\Xi_i$ of polyhedra $\Xi_i$ with nonempty interior, chosen as regions in which a reasonable number of data points are available to estimate integrals of the form \eqref{of the form}.  

Each polyhedron has a primal description in terms of generators, 
\begin{equation*}
\Xi_i=\conv(q^i_1,\dots,q^i_{n_i})+\cone(r^i_1,\dots,r^i_{o_i})
\end{equation*}
where $\conv(q^i_1,\dots,q^i_{n_i})$ is the polytope with vertices $q^i_n\in\R^n$, and 
\begin{equation*}
\cone(r^i_1,\dots,r^i_{o_i})=\left\{\sum_{m=1}^{o_i}\xi_m r^i_m:\,
\xi_m\geq 0\;(m\in\N_{o_i})\right\} 
\end{equation*}
is the polyhedral cone with recession directions $r^i_m\in\R^n$. Each polyhedron also has a dual description 
in terms of linear inequalities, 
\begin{equation*}
\Xi_i=\bigcap_{j=1}^{k_i}\left\{x\in\R^n:\, \langle f^i_{j}, x\rangle\geq \ell^i_j\right\}, 
\end{equation*}
for some vectors $f^i_j\in\R^n$ and bounds $\ell^i_j\in\R$. The main case of interest is where $\Xi_i$ is either a finite or infinite box in $\R^n$ with faces parallel to the coordinate axes, or an intersection of such a box with a linear half space, in which case it is easy to pass between the primal and dual descriptions. Note 
however that the dual description is preferrable, as the description of a box in $\R^n$ requires only $2n$ linear inequalities, while the primal description requires $2^n$ extreme vertices. 

Let us now consider the problem 
\begin{align*}
\text{(P)}\quad\sup_{\calF\in\cM_{\fF}}\,&\int_{\Phi}h(x)\diff \calF(x)\\
\text{s.t. }&\int_{\Phi}\phi_{s}(x)\diff\calF(x)
\leq a_{s},\quad(s=1,\dots,M),\\
\text{s.t. }&\int_{\Phi}\psi_{t}(x)\diff\calF(x)
= b_{t},\quad(t=1,\dots,N),\\
&\int_{\Phi}1\diff\calF(x)=1,\\
&\calF\geq 0,
\end{align*}
where the test functions $\psi_t$ are piecewise linear on the partition $\Phi=\bigcup_{i=1}^k\Xi_i$, and where $-h(x)$ and the test functions $\phi_s$ are piecewise linear on the infinite polyhedra of the partition, and either jointly linear, concave, or convex on the finite polyhedra (i.e., polytopes) of the partition. The dual of (P) is 
\begin{align}
\text{(D)}\quad\inf_{(y,z)\in\R^{M+N+1}}
\,&\sum_{s=1}^{M}a_s y_s+\sum_{t=1}^{N}b_t z_t + z_0,\nonumber\\
\text{s.t. }&\sum_{s=1}^{M}y_s\phi_s(x)+\sum_{t=1}^{N}z_t\psi_t(x)+z_0\1_{\Phi}(x)
-h(x)\geq 0,\; (x\in\Phi),\label{pos0}\\
&y\geq 0.\nonumber
\end{align}

We remark that (P) is a semi-infinite programming problem with infinitely many variables and finitely many constraints, while (D) is a semi-infinite programming problem with finitely many variables and infinitely many constraints. However, the constraint \eqref{pos0} of (D) can be rewritten as copositivity requirements over the polyhedra $\Xi_i$, 
\begin{equation*}
\sum_{s=1}^{M}y_s\phi_s(x)+\sum_{t=1}^{N}z_t\psi_t(x)+z_0\1_{\Phi}(x)
-h(x)\geq 0,\quad(x\in\Xi_i), \quad(i=1,\dots,k). 
\end{equation*}
Next we will see how these copositivity constraints can be handled numerically, often by relaxing all but finitely many constraints. Nesterov's first order method can be adapted to solve the resulting problems, see \cite{nesterov1, nesterov2, sergey}. 

In what follows, we will use the notation
\begin{equation*}
\varphi_{y,z}(x)=\sum_{s=1}^{M}y_s\phi_s(x)+\sum_{t=1}^{N}z_t\psi_t(x)+z_0-h(x). 
\end{equation*}

\subsection{Piecewise Linear Test Functions}\label{piecewise linear}

The first case we discuss is when $\phi_s|_{\Xi_i}$ and $h|_{\Xi_i}$ are jointly linear. Since we 
furthermore assumed that the functions $\psi_t|_{\Xi_i}$ are linear, there exist vectors $v^i_s\in\R^n$, 
$w^i_t\in\R^n$, $g^i\in\R^n$ and constants $c^i_s\in\R$, $d^i_t\in\R$ and $e^i\in\R$ such that 
\begin{align*}
\phi_s|_{\Xi_i}(x)&=\langle v^i_s, x\rangle+c^i_s,\\
\psi_t|_{\Xi_i}(x)&=\langle w^i_t, x\rangle+d^i_t,\\
h|_{\Xi_i}(x)&=\langle g^i, x\rangle+e^i. 
\end{align*}
The copositivity condition 
\begin{equation*}
\sum_{s=1}^{M}y_s\phi_s(x)+\sum_{t=1}^{N}z_t\psi_t(x)+z_0\1_{\Phi}(x)
-h(x)\geq 0,\quad(x\in\Xi_i) 
\end{equation*}
can then be written as 
\begin{multline*}
\langle f^i_j, x\rangle \geq \ell^i_j,\quad (j=1,\dots,k_i)\Longrightarrow\\
\left\langle\sum_{s=1}^{M}y_s v^i_s+\sum_{t=1}^{N}z_t w^i_t-g^i\;,\; x\right\rangle\geq e^i-\sum_{s=1}^{M}y_s c^i_s - \sum_{t=1}^{N}z_t d^i_t - z_0. 
\end{multline*}
By Farkas' Lemma, this is equivalent to the constraints 
\begin{align}
\sum_{s=1}^{M}y_s v^i_s+\sum_{t=1}^{N}z_t w^i_t-g^i&=\sum_{j=1}^{k_i}\lambda^i_j f^i_j,\label{c1}\\
e^i-\sum_{s=1}^{M}y_s c^i_s - \sum_{t=1}^{N}z_t d^i_t - z_0&\leq\sum_{j=1}^{k_i}\lambda^i_j \ell^i_j,\label{c2}\\
\lambda^i_j&\geq 0,\quad(j=1,\dots,k_i),\label{c3}
\end{align}
where $\lambda^i_j$ are additional auxiliary decision variables. 

Thus, if all test functions are linear on all polyhedral pieces $\Xi_i$, then the dual (D) can be solved as a linear programming problem with $M+N+1+\sum_{i=1}^k k_i$ variables and $k(n+1)$ linear constraints, 
plus bound constraints on $y$ and the $\lambda^i_j$. More generally, if some but not all polyhedra correspond to jointly linear test function pieces, then jointly linear pieces can be treated as discussed above, while other pieces can be treated as discussed below. 
 
Let us briefly comment on numerical implementations, further details of which are described in the second author's thesis \cite{sergey}: An important case of the above described framework corresponds to a discretized marginal problem in which $\phi_s(x)$ are piecewise constant functions chosen as follows for $s=(i,j)$, $(\iota=1,\dots,n; j=1,\dots,m)$: Introduce $m+1$ breakpoints 
$\xi^{\iota}_0<\xi^{\iota}_1<\dots<\xi^{\iota}_m$ along each coordinate axis $\iota$, and 
consider the infinite slabs 
\begin{equation*}
S_{\iota,j}=\left\{x\in\R^n:\, \xi^{\iota}_{j-1}\leq x_{\iota}\leq\xi^{\iota}_{j}\right\}, \quad(j=1,\dots,m).
\end{equation*}
Then choose $\phi_{\iota,j}(x)=1_{S_{\iota,j}}(x)$, the indicator function of slab $S_{\iota,j}$. We remark that this approach corresponds to discretizing the constraints of the Marginal Problem described in Section \ref{special cases}, but not to discretizing the probability measures over which we maximize the aggregated risk. 

While the number of test functions is $nm$ and thus linear in the problem dimension, the number of polyhedra to consider is exponentially large, as all intersections of the form 
\begin{equation*}
\Xi_{\iota,\vec{j}}=\bigcap_{\iota=1}^n S_{\iota, j_{\iota}}
\end{equation*}
for the $m^n$ possible choices of $\vec{j}\in\N_{m}^n$ have to be treated separately. In addition, in VaR applications $h(x)$ is taken as the indicator function of an affine half space $\{x:\sum x_\iota \geq \tau\}$ for a suitably chosen threshold $\tau$, and for CVaR applications $h(x)$ is chosen as the piecewise linear function $h(x)=\max(0, \sum x_\iota -\tau)$. Thus, polyhedra $\Xi_{\iota,\vec{j}}$ that meet the affine hyperplane $\{x:\,\sum x_\iota=\tau\}$ are further sliced into two separate polyhedra. A straightforward application of the above described LP framework would thus lead to an LP with exponentially many constraints and variables. Note however,  that the constraints \eqref{c1}--\eqref{c3} now read  
\begin{align}
g^i&=\sum_{j=1}^{k_i}\lambda^i_j f^i_j,\label{c4}\\
e^i-\sum_{s=1}^{M}y_s c^i_s - z_0&\leq\sum_{j=1}^{k_i}\lambda^i_j \ell^i_j,\label{c5}\\
\lambda^i_j&\geq 0,\quad(j=1,\dots,k_i),\label{c6}
\end{align}
as $v^i_s=0$ and no test functions $\psi_t(x)$ were used, 
with $g^i=[\begin{smallmatrix}1&\dots&1\end{smallmatrix}]^{\T}$ when $\Xi_i\subseteq\{x:\,
\sum x_{\iota}\geq\tau\}$ and $g^i=0$ otherwise. That is, the vector that appears in the left-hand side of Constraint \eqref{c4} is fixed by the polyhedron $\Xi_i$ alone and does not depend on the decision variables $y,z_0$. Since $z_0$ is to be chosen as small as possible in an optimal solution of (D), the constraint \eqref{c5} has to be made as slack as possible. Therefore, the optimal values of $\lambda^i_j$ are also fixed by the polyhedron $\Xi_i$ alone and are identifiable by solving the small-scale LP 
\begin{align*}
(\lambda^{i}_j)^*=\arg\max_{\lambda}\;&\sum_{j=1}^{k_i}\lambda^{i}_j\ell^{i}_j\\
\text{s.t. }-g^i&=\sum_{j=1}^{k_i}\lambda^i_j f^i_j,\\
\lambda^i_j&\geq 0,\quad(j=1,\dots,k_i). 
\end{align*}
In other words, when the polyhedron $\Xi_i$ is considered for the first time, the variables $(\lambda^i_j)^*$ can be determined once and for all, after which the constraints \eqref{c4} -- \eqref{c6} can be replaced by 
\begin{equation*}
e^i-\sum_s y_{s}c^i_{s}-z_0\leq C_i,
\end{equation*}
where $C_i=\sum_{j=1}^{k_i}(\lambda^i_j)^* \ell^i_j$ and where the sum on the left-hand side only 
extends over the $n$ 
indices $s$ that correspond to test functions that are non-zero on $\Xi_i$. Thus, only the $nm+1$ decision variables $(y,z_0)$ are needed to solve (D). Furthermore, the exponentially many constraints correspond to an  extremely sparse constraint matrix, making the dual of (D) an ideal candidate to apply the simplex algorithm with delayed column generation. A similar approach is possible for the situation where $\phi_s$ is of the form  
\begin{equation*}
\phi_{s}(x)=1_{S_{s}}(x)\times\left(\langle v_{s}, x\rangle + c_{s}\right), 
\end{equation*}
for all $s=(\iota,j)$. The advantage of using test functions of this form is that fewer breakpoints $\xi_{\iota,j}$ are needed to constrain the distribution appropriately. 

To illustrate the power of delayed column generation, we present the following numerical experiments that compare the computation times of our standard simplex (SS) implementation and our simplex with delayed column generation (DCG) for different dimensions $d$ and numbers $k$ of polyhedra :

\begin{table}[h!]
\begin{center}
\begin{tabular}{|c|c|c|c|c|}
\hline
$d$ & $k$ & DCG & SS & DCG/SS\\
\hline
$2$ & $256$ & $2.282$ & $0.963$ & $2.340$\\
$2$ & $1'024$ & $1.077$ & $1.345$ & $0.801$\\
$2$ & $4'096$ & $3.963$ & $7.923$ & $0.500$ \\
$2$ & $16'384$ & $20.237$ & $34.495$ & $0.587$ \\
$2$ & $65'536$ & $124.89$ & $187.826$ & $0.665$ \\
\hline
$3$ & $4'096$ & $2.653$ & $13.103$ & $0.202$ \\
$3$ & $32'768$ & $42.863$ & $145.002$ & $0.300$ \\
\hline
$4$ & $4'096$ & $4.792$ & $25.342$ & $0.189$ \\
$4$ & $65'536$ & $345.6$ & $28'189.7$ & $0.012$ \\
\hline
\end{tabular}  
\end{center}
\end{table}

The results show that with increasing dimension, the savings of using delayed column generation over the standard simplex algorithm grows rapidly. Since our Matlab implementation of both methods are not optimized, the absolute computation times can be considerably improved, but the speedup achieved by using delayed column generation is representative.

\subsection{Piecewise Convex Test Functions}\label{piecewise convex}

When $\phi_s|_{\Xi_i}$ and $-h|_{\Xi_i}$ are jointly convex, then $\varphi_{y,z}(x)$ is convex. 
The copositivity constraint 
\begin{equation*}
\sum_{s=1}^{M}y_s\phi_s(x)+\sum_{t=1}^{N}z_t\psi_t(x)+z_0\1_{\Phi}(x)
-h(x)\geq 0,\quad(x\in\Xi_i) 
\end{equation*}
can then be written as 
\begin{equation*}
\langle f^i_j, x\rangle \geq \ell^i_j,\quad (j=1,\dots,k_i)\Longrightarrow
\varphi_{y,z}(x)\geq 0, 
\end{equation*}
and by Farkas' Theorem (see e.g., \cite{polik}), this condition is equivalent to 
\begin{align}
\varphi_{y,z}(x)+\sum_{j=1}^{k_i}\lambda^i_j\left(\ell^i_j-\langle f^i_j, x\rangle\right)&\geq 0,\quad(x\in\R^n),\label{global}\\
\lambda^i_j&\geq 0,\quad(j=1,\dots,k_i),\nonumber 
\end{align}
where $\lambda^i_j$ are once again auxiliary decision variables. While \eqref{global} does not reduce to finitely many constraints, the validity of this condition can be checked numerically by globally minimizing the convex function $\varphi_{y,z}(x)+\sum_{j=1}^{k_i}\lambda^i_j\left(\ell^i_j-\langle f^i_j, x\rangle\right)$.  The constraint \eqref{global} can then be enforced explicitly if a line-search method is used to solve the dual (D). 


\subsection{Piecewise Concave Test Functions}\label{piecewise concave}

When $\phi_s|_{\Xi_i}$ and $-h|_{\Xi_i}$ are jointly concave but not linear, then 
$\varphi_{y,z}(x)$ is concave and $\Xi_i=\conv(q^i_1,\dots,q^i_{n_i})$ is a polytope. 
The copositivity constraint 
\begin{equation}\label{copos}
\sum_{s=1}^{M}y_s\phi_s(x)+\sum_{t=1}^{N}z_t\psi_t(x)+z_0\1_{\Phi}(x)
-h(x)\geq 0,\quad(x\in\Xi_i) 
\end{equation}
can then be written as 
\begin{equation*}
\varphi_{y,z}(q^i_j)\geq 0,\quad(j=1,\dots,n_i). 
\end{equation*}
Thus, \eqref{copos} can be replaced by $n_i$ linear inequality constraints on the decision variables $y_s$ and $z_t$. 


\subsection{Piecewise Polynomial Test Functions}\label{piecewise polynomial}

Another case that can be treated via finitely many constraints is when $\phi_s|_{\Xi_i}$,  $\psi_t|_{\Xi_i}$ and $h|_{\Xi_i}$ are jointly polynomial. The approach of Lasserre \cite{lasserre} and Parrilo \cite{parrilo} can be applied to turn the copositivity constraint 
\begin{equation*}
\langle f^i_j, x\rangle \geq \ell^i_j,\quad (j=1,\dots,k_i)\Longrightarrow
\varphi_{y,z}(x)\geq 0, 
\end{equation*}
into finitely many linear matrix inequalities. However, this approach is generally limited to low dimensional applications. \\


\section{Conclusions}
Our analysis shows that a wide range of duality relations in use in quantitative risk management can be understood from the single perspective of a generalized duality relation discussed in Section \ref{duality}. An interesting class of special cases is provided by formulating a finite number of constraints in the form of bounds on integrals. The duals of such models are semi-inifinite optimization problems that can often be reformulated as finite optimization problems, by making use of standard results on co-positivity. \\

\noindent  {\bf Acknowledgements:} 
Part of this research was conducted while the first author was visiting the 
FIM at ETH Zurich during sabbatical leave from Oxford. He thanks for the 
support and the wonderful research environment he encountered there. 
The research of the the first two authors was supported through grant EP/H02686X/1 
from the Engineering and Physical Sciences Research Council of the UK.


\begin{thebibliography}{19}

\bibitem{borwein-lewis}
Borwein, J. and Lewis, A.: Convex Analysis and Nonlinear Optimization: Theory and Examples (CMS Books in Mathematics). 

\bibitem{popescu}
Bertsimas, D. and Popescu, I.: Optimal inequalities in probability theory: a convex optimization approach. 
SIAM J. Optim., \textbf{15} (3), 780--804 (2005). 

\bibitem{puccetti1}
Embrechts, P. and Puccetti, G.: Bounds for functions of dependent risks. Finance and Stoch. \textbf{10}, 341--352 (2006). 

\bibitem{puccetti2}
Embrechts, P. and Puccetti, G.: Bounds for functions of multivariate risks. Journal of Multivariate Analysis  \textbf{97} 526--547 (2006). 

\bibitem{puccetti3}
Embrechts, P. and Puccetti, G.: Aggregating risk capital, with an application to operational risk. Geneva Risk Insur. Rev. \textbf{31} 71--90 (2006).

\bibitem{puccettinew3}
Embrechts, P., Puccetti, G. and R\"uschendorf, L.: Model uncertainty and VaR aggregation. J. Bank. Financ. \textbf{37} (8), 2750--2764 (2013).

\bibitem{lasserre}
Lasserre, J.B.: Global optimization with polynomials and the problem of moments. SIAM J. Optim. \textbf{11} (3), 796--817 (2001).

\bibitem{nesterov1}
Nesterov, Y.: A Method of Solving a Convex Programming Problem with Convergence Rate $O(1/k^2)$. Soviet Math. Dokl. \textbf{27} (2), (1983).  

\bibitem{nesterov2}
Nesterov, Y.: Smooth Minimization of Non-Smooth Functions. Math. Program., Ser. A \textbf{103}, 127--152 (2005). 

\bibitem{parrilo}
Parrilo, P. A.: Semidefinite programming relaxations for semialgebraic problems. Math. Program., Ser. B \textbf{96}, 293--320 (2003). 

\bibitem{polik}
P\'olik, I. and Terlaky, T.: A survey of the S-Lemma. SIAM Review, \textbf{49}, 371--418 (2007).

\bibitem{puccettinew1}
Puccetti, G. and R\"uschendorf, L.: Computation of sharp bounds on the distribution of a function of dependent risks. J. Comp. Appl. Math., \textbf{236} (7), 1833--1840 (2012).

\bibitem{puccettinew2}
Puccetti, G. and R\"uschendorf, L.: Sharp bounds for sums of dependent risks. J. Appl. Probab. \textbf{50} (1), 42--53 (2013).

\bibitem{ramachandran}
Ramachandran, D. and R\"uschendorf, L.: A general duality theorem for marginal problems. Probab. Theory Related Fields \textbf{101} (3),  311–319 (1995).

\bibitem{ruschendorf1}
R\"uschendorf, L.: Random variables with maximum sums. Adv. Appl. Probab. \textbf{14} (3), 623--632 (1982).

\bibitem{ruschendorf2}
R\"uschendorf, L.: Construction of multivariate distributions with given marginals. Ann. Inst. Statist. Math. \textbf{37} (2), 225--233 (1985).

\bibitem{sergey}
Shahverdyan, S.: Optimisation Methods in Risk Management. 
DPhil thesis, Oxford Mathematical Institute (2014). 

\bibitem{wang2}
Wang, R., Peng, L. and Yang, J.: Bounds for the sum of dependent risks and worst value-at-risk with monotone marginal densities. Finance Stoch. \textbf{17} (2), 395--417 (2013).

\bibitem{wang1}
Wang, B.\ and Wang, R.: The complete mixability and convex minimization problems with monotone marginal densities. J. Multivariate Anal. \textbf{102} (10), 1344--1360 (2011).

\end{thebibliography}
\end{document}